\newtheorem{theorem}{Theorem}[section]
\newcommand{\prodasm}[1]{\mathcal{A}[\mathcal{#1}]}
\newcommand{\dom}[1]{{\rm dom}(#1)}
\newcommand{\ta}{\tilde{\alpha}}
\newcommand{\tb}{\tilde{\beta}}
\newcommand{\tg}{\tilde{\gamma}}
\newcommand{\str}[1]{{\rm str}(#1)}
\newcommand{\NP}{{\ensuremath{\mathsf{NP}}}}
\newcommand{\coNP}{{\ensuremath{\mathsf{coNP}}}}
\newcommand{\PSPACE}{{\ensuremath{\mathsf{PSPACE}}}}
\begin{document}

\title{Size-Dependent Tile Self-Assembly:\\Constant-Height Rectangles and Stability}
\author{S\'andor P. Fekete\thanks{TU Braunschweig, \protect{\url{s.fekete@tu-bs.de}}}
\and Robert T. Schweller\thanks{University of Texas--Pan American, \protect{\url{rtschweller@utpa.edu}}}
\and Andrew Winslow\thanks{Universit\'e libre de Bruxelles, \protect{\url{awinslow@ulb.ac.be}}}}
\date{}

\maketitle

\begin{abstract}
We introduce a new model of algorithmic tile self-assembly called \emph{size-dependent assembly}.
In previous models, supertiles are stable when the total strength of the bonds between any two halves exceeds some constant temperature.
In this model, this constant temperature requirement is replaced by an nondecreasing \emph{temperature function} $\tau : \mathbb{N} \rightarrow \mathbb{N}$ 
that depends on the size of the smaller of the two halves.
This generalization allows supertiles to become unstable and break apart, 
and captures the increased forces that large structures may place on the bonds holding them together.

We demonstrate the power of this model in two ways.
First, we give fixed tile sets that assemble constant-height rectangles and squares of arbitrary input size given an appropriate temperature function. 
Second, we prove that deciding whether a supertile is stable is \coNP-complete. 
Both results contrast with known results for fixed temperature. 
\end{abstract}

\section{Introduction}

In this paper, we introduce the \emph{size-dependent tile self-assembly model}, a natural extension of the well-studied \emph{two-handed tile assembly model} or \emph{2HAM}~\cite{Cannon-2013a}.
As in the 2HAM, a size-dependent system consists of a collection of square Wang tiles~\cite{Robinson-1971,Wang-1961} with an associated \emph{bond strength} assigned to each tile edge color.
In the 2HAM, self-assembly proceeds by repeatedly combining any two previously assembled \emph{supertiles} into a new \emph{stable} supertile provided the total bond strength between the supertiles meets or exceeds some positive integer called the \emph{temperature}.

Although the 2HAM is both simple and natural, the model does not capture the intuition that two large assemblies should require more bond strength to be stable than two very small assemblies.
As an analogy, a single staple is sufficient to attach two pieces of paper or to attach a sheet of paper to the hull of a battleship.
However, a staple is too weak to amalgamate together two battleships.

The size-dependent self-assembly model generalizes the 2HAM by replacing the fixed, integer temperature parameter $\tau$ of the 2HAM with a nondecreasing \emph{temperature function} $\tau(n)$ that specifies a required threshold of bond strength when given the size of the smaller of two supertiles under consideration.
A set of tile types and temperature function together define a size-dependent self-assembly system.

\paragraph{Our results.}
We first consider efficiently assembling fixed-height rectangles and squares in the size-dependent self-assembly model.
We prove that there exists a fixed tile set assembling a $k \times 3$ rectangle for every $k \geq 7$ given an appropriate temperature function.
This tile set is extended to obtain a matching result for $k \times k$ squares.
These results demonstrate that size-dependent temperature functions can, in theory, direct assembly in the spirit of temperature programming~\cite{Kao-2006a,Summers-2009}, concentration programming~\cite{Becker-2006,Doty-2010c,Kao-2008a}, and staging~\cite{Demaine-2008b}.
Unlike these other methods, size-dependence is present in all physical systems, but has not be demonstrated to be programmable.
Thus these constructions demonstrate that this ubiquitous aspect of physical systems can (and likely already does) direct assembly in dramatic ways, regardless of whether they can be implemented physically.

In addition to the design of systems that assemble rectangles and squares, we consider the complexity of determining if a supertile is \emph{stable}, i.e. cannot break apart due to insufficient bond strength.
Determining the stability of an supertile is a fundamental problem for design, simulation, and analysis of tile self-assembly systems.
This problem enjoys a straightforward, polynomial-time solution in the 2HAM.
In contrast, we prove that the problem is \coNP-complete in the size-dependent model, even for temperature functions with just two distinct temperatures. 

\paragraph{Reversibility.}
A key feature of size-dependence is \emph{reversibility}: the possibility of breaking bonds.
Our rectangle and square constructions make critical use of reversibility to beat tile type lower bounds in similar models (see~\cite{Rothemund-2000a}), and our hardness result proves that this mechanism is capable of complex behaviors.

Reversibility has been more directly incorporated into a number of other self-assembly models via glues that repel~\cite{Doty-2010d,Reif-2006} or deactivate~\cite{Hendricks-2013a,Keenan-2013,Padilla-2013a}, tiles that dissolve~\cite{Abel-2010a}, and temperatures that change over time~\cite{Aggarwal-2005a,Summers-2009}.
Reversibility in these models has yielded a number of new functionalities, including replication~\cite{Abel-2010a,Keenan-2013}, fuel-efficient computation~\cite{Padilla-2013a,Schweller-2013a}, shape identification~\cite{Patitz-2012b}, and efficient small-scale assembly of general shapes~\cite{Demaine-2011b}.
We believe that further study of the ubiquitous but indirect form of reversibility found in size-dependent self-assembly may yield similar functionality.

\section{Definitions}
\label{sec:definitions}

The first three subsections define the 2HAM, giving definitions equivalent to those in prior work, e.g.~\cite{Cannon-2013a}. 
The final section describes the differences between the two-handed and size-dependent models.

\subsection{Tiles, assemblies, and supertiles}

A \emph{tile type} is a quadruple $(g_N, g_E, g_S, g_W)$ of \emph{glues} from a fixed alphabet $\Sigma$.
Each glue $g_i \in \Sigma$ has an associated non-negative integer \emph{strength}, denoted by $\str{g_i}$.\footnote{In later sections, glues with strength~0 are treated as non-existent.}
An instance of a tile type, called a \emph{tile}, is an axis-aligned unit square with center in $\mathbb{Z}^2$.
The edges of a tile are labeled with the glues of the tile's type (e.g. $g_N$, $g_E$, $g_S$, $g_W$) in clockwise order, starting with the edge with normal vector $\langle 0, 1 \rangle$.
Two tiles are \emph{adjacent} if their centers have distance~1.

An \emph{assembly} $\alpha$ is a partial mapping $\alpha : \mathbb{Z}^2 \rightarrow T$ from tile locations to a set of tile types $T$, also called a \emph{tile set}.
The domain of this partial function is denoted by $\dom{\alpha}$.
Each assembly has a dual \emph{bond graph}: a grid graph with vertex set $\dom{\alpha}$ and an edge between every pair of adjacent tiles that form a bond.
An edge cut of the bond graph of an assembly is also called a \emph{cut} of the assembly, and the total strength of the bonds of the edges in the cut is the \emph{strength} of the cut.
An assembly is \emph{$\tau$-stable} if every cut of the assembly has strength at least $\tau$.

For an assembly $\alpha: \mathbb{Z}^2 \rightarrow T$ and vector $\vec{u} = \langle x, y \rangle$ with $x, y \in \mathbb{Z}^2$, the assembly $\alpha + \vec{u}$ denotes 
the assembly consisting of the tiles in $\alpha$, each translated by $\vec{u}$.
For two assemblies $\alpha$ and $\beta$, $\beta$ is a \emph{translation} of $\alpha$, written $\beta \simeq \alpha$, provided that there exists a vector $\vec{u}$ such that $\beta = \alpha + \vec{u}$.
The \emph{supertile} of $\alpha$ is the set $\ta = \{ \beta : \alpha \simeq \beta \}$.
A supertile $\ta$ is \emph{$\tau$-stable} provided that the assemblies it contains are $\tau$-stable.
The \emph{size} of a supertile is denoted by $|\ta|$ and is equal to the size of an assembly in $\ta$ (and not the cardinality of $\ta$, which is always $\aleph_0$).

\subsection{The assembly process}

Two assemblies $\alpha$ and $\beta$ are \emph{disjoint} if $\dom{\alpha} \cap \dom{\beta} = \varnothing$.
The \emph{union} of two disjoint assemblies $\alpha$ and $\beta$, denoted by $\alpha \cup \beta$, is the partial function $\alpha \cup \beta : \mathbb{Z}^2 \rightarrow T$ defined as $(\alpha \cup \beta)(x, y) = \alpha(x, y)$ if $(x, y) \in \dom{\alpha}$ and $(\alpha \cup \beta)(x, y) = \beta(x, y)$ if $(x, y) \in \dom{\beta}$.
Two supertiles $\ta$ and $\tb$ can \emph{combine} into a supertile $\tg$ provided:
\begin{itemize}
\item There exist disjoint assemblies $\alpha \in \ta$ and $\beta \in \tb$.
\item $\alpha \cup \beta = \gamma \in \tg$ and the cut partioning $\dom{\gamma}$ into $\dom{\alpha}$ and $\dom{\beta}$ has strength at least $\tau$ (equivalently, $\gamma$ is $\tau$-stable).
\end{itemize}
The set of all combinations of $\ta$ and $\tb$ at temperature $\tau$ is denoted by $C^\tau_{\ta, \tb}$.

\subsection{Two-handed tile assembly systems}

A \emph{two-handed tile assembly system} or \emph{two-handed system} is a pair $\mathcal{T} = (T,\tau)$, where $T$ is a \emph{tile set} and $\tau \in \mathbb{N}$ is a \emph{temperature}.
Given a system $\mathcal{T}=(T,\tau)$, a supertile $\ta$ is \emph{producible}, written $\ta \in \prodasm{T}$, provided that either $|\ta| = 1$ or $\ta$ is a combination of two other producible supertiles of $\mathcal{T}$.
A supertile $\ta$ is \emph{terminal} provided that for all producible supertiles $\tb$, $C^\tau_{\ta,\tb} = \varnothing$.
A system is \emph{directed} or \emph{deterministic} provided that it has only one terminal supertile.

Given a \emph{shape} $P \subseteq \mathbb{Z}^2$, we say a system $\mathcal{T}$ \emph{self-assembles $P$}, provided that every terminal supertile $\ta$ of $\mathcal{T}$ has an assembly $\alpha \in \ta$ such that $\dom{\alpha} = P$.
That is, every terminal supertile has shape $P$, up to translation.
A shape $P$ is a \emph{$w \times h$ rectangle} provided that $P = \{x+1, x+2, \dots, x+w\} \times \{y+1, y+2, \dots, y+h\}$ for some $x, y, w, h \in \mathbb{Z}$.
If $w = h$, then the rectangle is a \emph{square}.

\subsection{Size-dependent systems}

A \emph{size-dependent two-handed tile assembly system} or \emph{size-dependent system} $\mathcal{S} = (T, \tau)$ is a generalization of a two-handed tile assembly system.
Two-handed and size-dependent systems are identical, except for the definition of $\tau$.
Recall that in two-handed systems, $\tau \in \mathbb{N}$ determines the bond strength needed for two supertiles to combine and for a supertile to be $\tau$-stable.

In size-dependent systems, $\tau$ is not an integer temperature, but rather a nondecreasing \emph{temperature function} $\tau : \mathbb{N} \rightarrow \mathbb{N}$.
An assembly $\gamma$ is \emph{$\tau$-stable} provided any cut partioning $\dom{\gamma}$ into two assemblies $\dom{\alpha}$, $\dom{\beta}$ has strength at least $\tau(\rm{min}(|\alpha|, |\beta|))$.
A supertile $\tg$ is \emph{$\tau$-stable} provided the assemblies in $\tg$ are $\tau$-stable.  
Also, two supertiles $\ta$ and $\tb$ can \emph{combine} into a supertile $\tg$ provided that:
\begin{itemize}
\item There exist disjoint assemblies $\alpha \in \ta$ and $\beta \in \tb$.
\item $\alpha \cup \beta = \gamma \in \tg$ and the cut partioning $\dom{\gamma}$ into $\dom{\alpha}$ and $\dom{\beta}$ has strength at least $\tau(\rm{min}(|\alpha|, |\beta|))$.
\end{itemize}
For a given temperature function $\tau : \mathbb{N} \rightarrow \mathbb{N}$, the set of all combinations of $\ta$ and $\tb$ is denoted by $C^\tau_{\ta, \tb}$.
Note that the second condition is not equivalent to $\gamma$ being $\tau$-stable.
Figure~\ref{fig:become-unstable} illustrates an example: a cut in a supertile has sufficient strength, but combining with another supertile causes increased size that causes the cut to become insufficiently strong.
So $\ta$, $\tb$ may be $\tau$-stable while their combination $\tg$ is \emph{$\tau$-unstable}.

\begin{figure}[ht]
\centering
\includegraphics[scale=1.0]{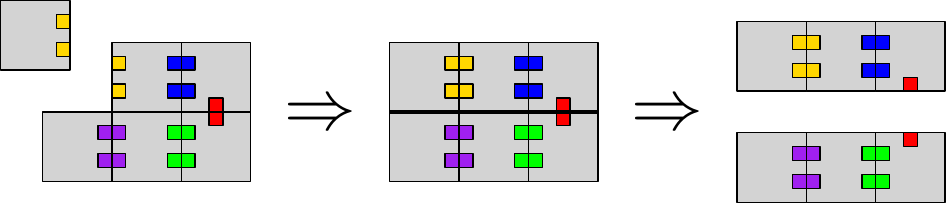}
\caption{Three steps of size-dependent self-assembly with glue function $\tau(n) = n-1$.
The addition of a new tile (left) causes the supertile to have a strength-1 cut partioning it into two supertiles of~3 tiles each (center).
Because $\tau(3) = 2 > 1$, the supertile can then break (right).}
\label{fig:become-unstable}
\end{figure}

Supertiles that are $\tau$-unstable can also ``break'' into smaller supertiles.  
A supertile $\tg$ can \emph{break} into $\ta$ and $\tb$ provided that:
\begin{itemize}
\item There exist disjoint assemblies $\alpha \in \ta$ and $\beta \in \tb$ with connected bond graphs.
\item $\alpha \cup \beta = \gamma \in \tg$ and the strength of the cut partioning $\gamma$ into $\alpha$ and $\beta$ is less than $\tau(\rm{min}(|\alpha|, |\beta|))$.
\end{itemize}
A cut between two supertiles resulting from a break is called a \emph{break cut}. 
For a given temperature function $\tau : \mathbb{N} \rightarrow \mathbb{N}$, the set of all supertiles resulting from breaks of $\tg$ is denoted by $B^{\tau}_{\tg}$. 
Given a size-dependent system $\mathcal{T} = (T, \tau)$, a supertile $\ta$ is \emph{producible} provided either:
\begin{itemize}
\item $|\ta| = 1$.
\item $\ta$ is the combination of two other producible supertiles.
\item $\ta$ is the result of a break of a producible supertile.
\end{itemize}
A producible supertile $\ta$ is \emph{terminal} provided $C^{\tau}_{\ta, \tb} = \varnothing$ and $B^{\tau}_{\ta} = \varnothing$.

Note that the conditions on supertiles combining and breaking do \emph{not} imply that combining supertiles or supertiles resulting from a break are $\tau$-stable.
This allows for systems with an infinite number of producible supertiles and a unique terminal supertile, including those described in this work.


\section{Constant-Height Rectangles}
\label{sec:line}

Here we prove that there exists a single set of tiles that can be used to self-assemble constant-height rectangles of arbitrary width using an appropriate choice of temperature function.
Such a result contrasts with the polynomial number of tiles required to assemble a constant-height rectangle in an assembly system with constant temperature~\cite{Aggarwal-2005a}.

\begin{theorem}
\label{thm:lazy-line}
There exists a tile set $T$ such that for every $k \geq 7$, there exists a size-dependent system with tile set $T$ that self-assembles a $k \times 3$ rectangle.
\end{theorem}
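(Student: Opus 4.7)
The plan is to exhibit a fixed tile set $T$ and, for each $k \geq 7$, a nondecreasing temperature function $\tau_k$ such that the only terminal supertile of the size-dependent system $(T, \tau_k)$ is a $3 \times k$ rectangle. The key tool will be reversibility: oversize supertiles are allowed as producible intermediates, but under $\tau_k$ they have break cuts that push reassembly toward width $k$, while undersize supertiles can still attach an additional column and grow.

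I would design $T$ around three distinguished row positions (top, middle, bottom), with strength-$2$ vertical glues inside each column so a single column ($1 \times 3$) can assemble as a primitive unit at any $\tau \leq 2$, and strength-$1$ horizontal glues in each row so that two adjacent columns bond with total strength~$3$. A constant number of auxiliary ``cap'' tile types would handle the left and right ends, and a preliminary structural lemma will argue that every producible supertile (up to a bounded list of exceptions) is a subset of some $3 \times n$ grid, reducing the subsequent analysis to strips. For the given $k$, I would choose $\tau_k$ to be a nondecreasing step function with its jumps placed at sizes $3$ and $3\lfloor k/2 \rfloor + 1$: $\tau_k \leq 2$ initially so that columns can form; $\tau_k = 3$ on a middle range so that columns combine into strips; and $\tau_k > 3$ on the top range so that any column-boundary cut of a strip whose smaller side exceeds $3 \lfloor k/2 \rfloor$ becomes a break cut. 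I would then verify: (a) $\tau_k$-stability of $3 \times k$ by a short isoperimetric check on the $3 \times k$ grid; (b) terminality of $3 \times k$, by ensuring that every attempted combination with a producible partner has bond strength below the required $\tau_k$ of the smaller side, or immediately yields a break cut returning to $3 \times k$; (c) instability of every $3 \times n$ with $n > k$ via the break cut just described; and (d) non-terminality of every $3 \times n$ with $n < k$, via attachment of a column at bond~$3$, admissible since $\tau_k(3) = 3$.

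The main obstacle will be making the combine-and-break dynamics of items (b) and (c) converge to $3 \times k$ and rule out unintended terminal supertiles, and this is where the hypothesis $k \geq 7$ enters. Concretely, $k$ must be large enough that the thresholds $3$ and $3 \lfloor k/2 \rfloor$ leave room for a nondecreasing $\tau_k$ that is $\leq 3$ on every cut of $3 \times k$ yet $> 3$ on the weakest cut of $3 \times (k+1)$, and so that the break products of oversize strips like $3 \times (k+1)$ (which split roughly in half) have widths that can recombine, possibly through further combine-break cycles, exactly to $3 \times k$ rather than stall at some foreign stable width. I anticipate closing (d) with a careful case analysis on the widths of break products near $k$, supplemented by a potential-function argument on the combine-break graph showing that every sequence of producibility steps has $3 \times k$ as its unique limit.
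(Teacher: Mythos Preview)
Your plan has a genuine gap at step~(b), and it is the heart of the difficulty. Under your glue scheme (strength-$1$ horizontal, strength-$2$ vertical), a single $1 \times 3$ column can attach to \emph{any} strip end with bond strength~$3$, and since $\tau_k(3) \leq 3$ this combination is always admissible. In particular it is admissible when the strip already has width~$k$, so the $3 \times k$ rectangle is not terminal. Your escape clause ``or immediately yields a break cut returning to $3 \times k$'' does not help: by definition, a supertile $\ta$ is terminal only if $C^{\tau}_{\ta,\tb} = \varnothing$ for every producible $\tb$, regardless of what happens afterwards. The moment a column can attach, terminality is lost. You mention cap tiles, but you would then need a mechanism ensuring caps attach exactly at width~$k$ and at no smaller width; nothing in the proposal provides this, and with uniform column glues there is no size signal available to the caps.

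There is a second, related gap in~(c) and the convergence argument. With your thresholds, a $3 \times n$ strip has a column break cut only when the smaller side exceeds $3\lfloor k/2 \rfloor$ tiles, i.e.\ only when $n \geq 2\lfloor k/2 \rfloor + 2$. For even $k$ this leaves $3 \times (k+1)$ stable as well as $3 \times k$, and even when breaks occur they produce pieces of width near $k/2$, not near $k$. Your hoped-for potential-function argument would have to show that repeated combine--break cycles on these half-width pieces converge uniquely to width~$k$; there is no evident monovariant, and in fact nothing in the system distinguishes $k$ from $k+1$ (or, for odd $k$, from $k-1$). The paper's construction avoids both issues by abandoning homogeneity: it assembles two \emph{asymmetric} halves of different heights with different glue patterns, uses a chain of breaks (through one-tab and two-tab intermediates) to force a \emph{lower} bound of $k-2$ on the width of one half, uses the temperature step to force an \emph{upper} bound of $k-2$ on the width of the other half at the moment of combination, and makes the combined rectangle terminal because the halves' external glues are exhausted. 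The squeeze between a break-enforced lower bound and a threshold-enforced upper bound is the missing idea in your proposal.
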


\begin{proof}
The temperature function used is:
\begin{displaymath}
\tau(n) = \left\{
\begin{array}{ll}
3 & : n \leq k-6\\
4 & : k-5 \leq n \leq k+3\\
5 & : k+4 \leq n \leq 2k-2\\
8 & : {\rm otherwise}
\end{array}
\right.
\end{displaymath}

The tile set consists of three tile types and two \emph{blocks}: supertiles with unique internal glues and strength~8, the maximum temperature of the system.
The tiles and blocks are listed and named in Figure~\ref{fig:lazy-line-blocks}.

\begin{figure}[ht]
\centering
\includegraphics[scale=1.0]{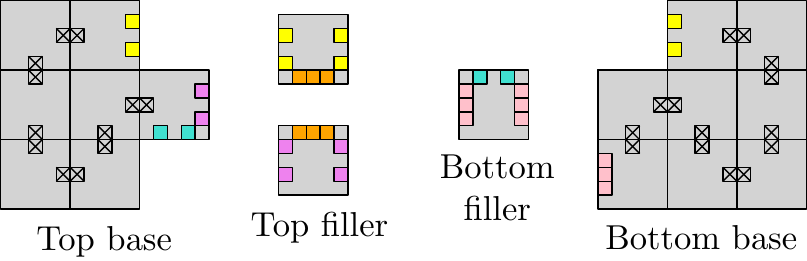}
\caption{The tile types and blocks for the constant-height rectangle construction.
The gray glues are unique and strength at least 8.}
\label{fig:lazy-line-blocks}
\end{figure}

The system works by assembling a unique terminal $k \times 3$ supertile in three phases.
First, top filler tiles and top bases combine into arbitrarily wide height-2 supertiles.
These undergo at least two breaks to form \emph{top half} supertiles of size $2k-3$.
Second and separately, bottom filler tiles and bottom bases combine to form \emph{bottom half} supertiles of size approximately $k+3$.
Finally, these two halves combine into a terminal $k \times 3$ supertiles shown in Figure~\ref{fig:lazy-line-final}.
It can easily be verified that this supertile is a terminal supertile of the system; it remains to be shown that no other terminal supertiles of the system exist (necessary for the system to \emph{self-assemble} a $k \times 3$ rectangle).

\begin{figure}[ht]
\centering
\includegraphics[scale=1.0]{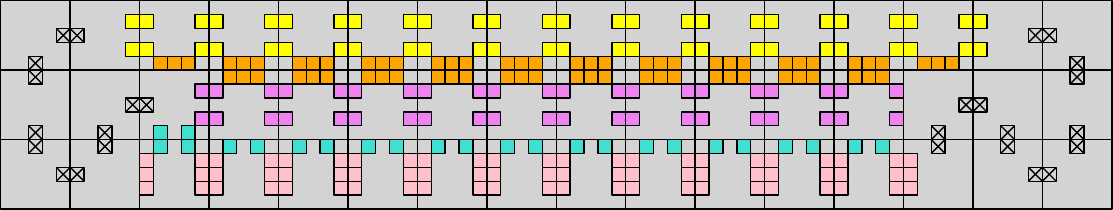}
\caption{The unique terminal supertile of the constant-height rectangle construction.}
\label{fig:lazy-line-final}
\end{figure}

\textbf{Top filler supertiles.}
To start, consider the producible supertiles consisting of only top filler tiles, called \emph{top filler} supertiles.
Because $\tau(n) > 2$ for all $n$, upper and lower top filler tiles must first combine into size-2 supertiles before combining with other top filler supertiles into height-2 rectangular supertiles (lower right supertile in Figure~\ref{fig:lazy-line-filler}).
These rectangular supertiles break along 2-edge and 3-edge cuts into the remaining supertiles seen in Figure~\ref{fig:lazy-line-filler}.

Because $k \geq 7$, any partition of the lower right supertile in Figure~\ref{fig:lazy-line-filler} either has a part that is a single tile or uses a strength-4 cut of at least~2 edges and thus both parts have size at least $k+3 \geq 10$.
Therefore, the remaining~8 types of supertiles in Figure~\ref{fig:lazy-line-filler} have at least~4 columns of~2 tiles each.

\begin{figure}[ht]
\centering
\includegraphics[scale=1.0]{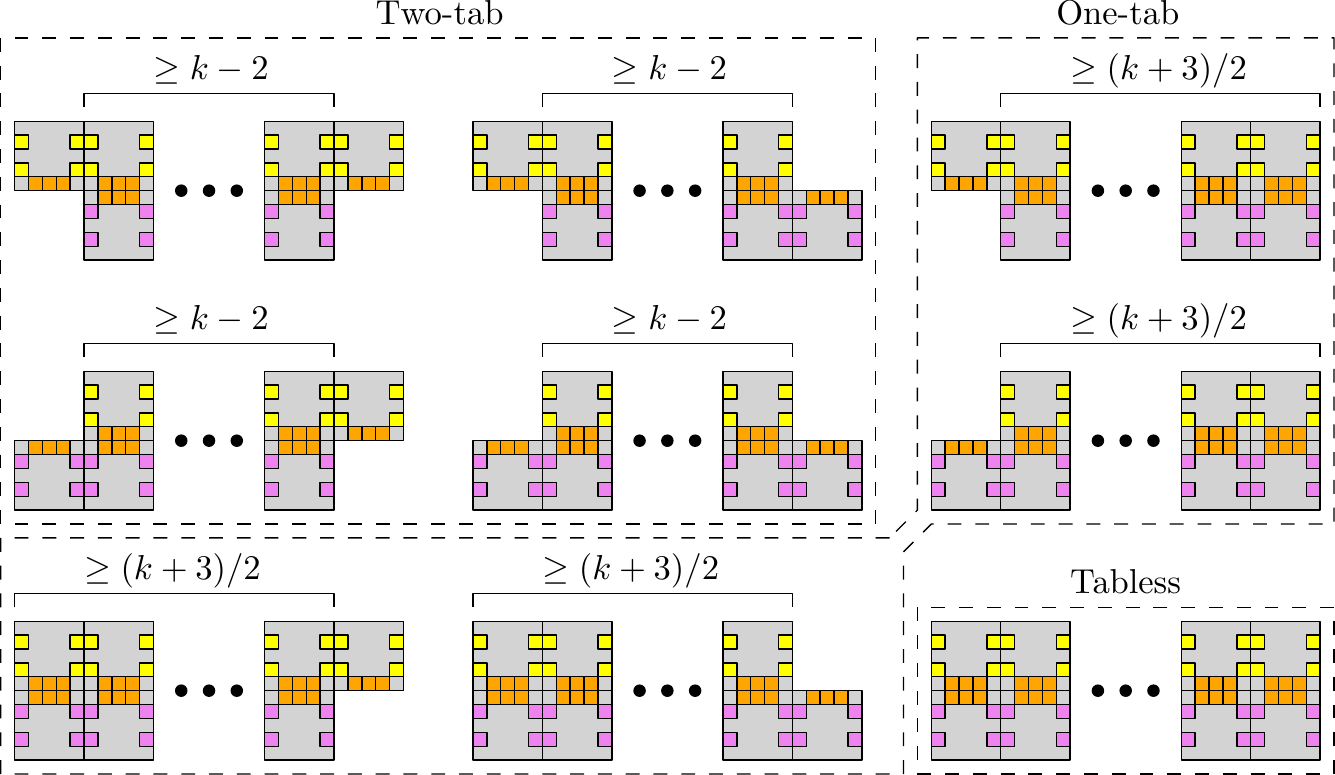}
\caption{The producible top filler supertiles.}
\label{fig:lazy-line-filler}
\end{figure}

The width bounds seen in the figure are computed by considering how the supertiles are created.
If the supertile is the result of a break, it must satisfy the size bound for the strength of the cut used in the break.
If it is the result of a combination, it must be larger than the total sizes of the combined supertiles.\footnote{An upper bound is also implied by $\tau$, but this is ignored here.}

We designate three types of top filler supertiles as seen in Figure~\ref{fig:lazy-line-filler}.
As already proven, breaks only result in single tiles or supertiles of size 10 and larger.
Any two-tab (one-tab) supertile can break into a one-tab (tabless) supertile and a single tile, and these are the only breaks that use cuts of strength at most~3.
Then any other break uses a cut of strength~4 or more, and so results in supertiles of size at least $k+4$.
Thus any combination of two-tab and one-tab supertiles has size at least $2(k+4)$.
A two-tab supertile can also be the result of a break using a cut of strength~7 and thus have size at least $2k-3$ and, because two-tab supertiles have even size, $2k-2$.
Because ${\rm min}(2(k+4), 2k-2)-1 = 2k-3$ and $k \geq 7$, $2k-3 \geq k+4$ and a break of a two-tab supertile into a single tile and one-tab supertile cannot yield a one-tab supertile smaller than $k+4$.
In conclusion, one-tab and two-tab supertiles have size at least $k+4$ and $2k-2$, respectively, implying the bounds seen in Figure~\ref{fig:lazy-line-filler}.

\textbf{Top and bottom halves.}
Top filler supertiles cannot combine with other supertiles, except for a complete top base to form a \emph{top half} supertile (upper supertile in Figure~\ref{fig:lazy-line-halves}).
Top half supertiles may combine with top filler supertiles and break into top half and top filler supertiles.
A top half supertile with a single upper filler tile in the rightmost column is \emph{ready}.
Because ready top half supertiles are two-tab top filler supertiles that have combined with a top base, they have size at least $2k-3$ and thus width at least $k-2$.

Independently of top halves, bottom filler tiles combine into arbitrarily wide height-1 supertiles called a \emph{bottom filler} supertile.
These supertiles also combine with bottom bases at various stages of assembly.
A \emph{bottom half} supertile contains bottom filler tiles and a completed bottom base.
If the number of bottom filler tiles in a bottom half is at least $2k-18$ (and there exists a 1-edge strength-3 cut partitioning the supertile into two of size at least $k-5$), the bottom half can break into a bottom half and bottom filler supertile.

\textbf{Combining halves.}
The only shared glues between top and bottom tiles are the strength-2 glues on the south of the top base and west of the bottom base (turquoise and yellow in Figure~\ref{fig:lazy-line-blocks}).
Thus a supertile consisting of bottom tiles cannot combine with a supertile consisting of top tiles, unless the supertiles are bottom and top halves.

\begin{figure}[ht]
\centering
\includegraphics[scale=1.0]{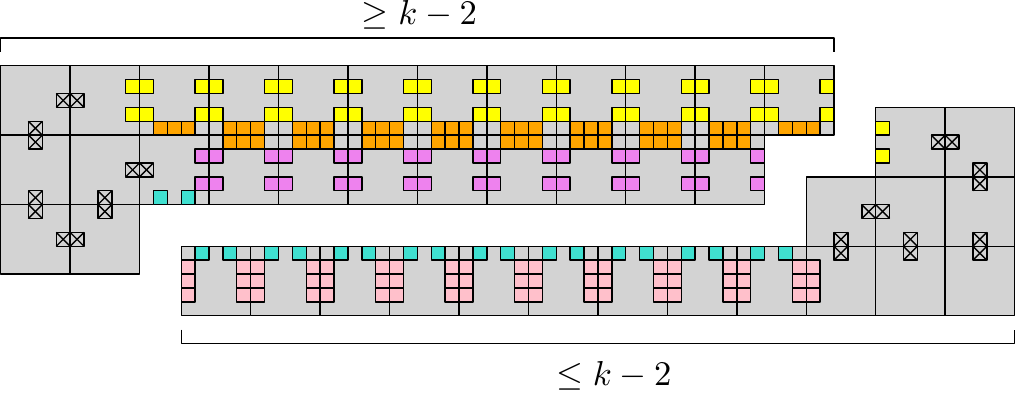}
\caption{The top half and bottom half supertiles. The bottom half $\lambda$ can be arbitrarily large, but the upper bound follows from the requirement that to combine, $\tau(|\lambda|) \leq 4$ and thus $|\lambda| \leq k$.}
\label{fig:lazy-line-halves}
\end{figure}

A bottom half and top half can combine, provided they have the same width and the top half is ready (and thus has width at least $k-2$.
Moreover, because the maximum strength of the bonds between the bottom and top halves is~4, they can only combine only if the smaller supertile, necessarily the bottom half, has size at most $k+3$ and thus width at most $k-2$.
Thus, the bottom and top halves combine provided they both have width exactly $k-2$, forming a terminal supertile of width exactly $k$.

\textbf{No waste.}
Although it is not required by the definition of self-assembly, this system also has the property that every supertile may undergo a sequence of breaks and combinations to become terminal.
In other words, the system has no ``waste'' supertiles.
This can be seen by noting that supertiles not found within the (unique) terminal supertile, i.e. top filler supertiles wider than $k-4$, top halves wider than $k-2$, bottom filler supertiles of width more than $k-5$, and bottom halves of width more than $k-2$ can repeatedly break into smaller supertiles that \emph{are} found in the terminal supertile.
\end{proof}

The temperature functions used in the previous construction have values bounded above by the constant~8.
Next, we prove that any set of temperature functions used to assemble arbitrarily large constant-height rectangles are similarly bounded above by a constant.

\begin{theorem}
\label{thm:line-lb}
Let $T$ be a tile set and $\tau_1, \tau_2, \dots$ be an infinite sequence of temperature functions such that the size-dependent system $(T, \tau_i)$ assembles a $k_i \times O(1)$ rectangle and all $k_i$ are distinct.
Let $f(n) = {\rm min}_{i \in \mathbb{N}}(\tau_i(n))$. 
Then $f(n) = O(1)$.
\end{theorem}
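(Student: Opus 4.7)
The plan is to exploit two fixed quantities: the maximum glue strength $s_{\max}$ appearing in the (finite) tile set $T$, and the uniform height bound $H$ on the rectangles coming from the $O(1)$ hypothesis. Together these force any straight vertical cut through one of the produced rectangles to have strength at most $H \cdot s_{\max}$, a constant independent of $i$. The key observation is then that $\tau_i$-stability of the terminal rectangle forces $\tau_i$ to be bounded by this constant on an initial segment of $\mathbb{N}$ whose length grows with $k_i$.

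First I would fix $i$ and note that the $k_i \times h_i$ rectangle (with $h_i \leq H$) assembled by $(T, \tau_i)$ is a terminal supertile, hence $\tau_i$-stable, since terminality in the size-dependent model requires $B^{\tau_i}_{\ta} = \varnothing$. For each column index $j \in \{1, \ldots, k_i - 1\}$, the straight vertical cut between columns $j$ and $j{+}1$ uses at most $h_i \leq H$ bond edges, so it has strength at most $H \cdot s_{\max}$. This cut partitions the rectangle into pieces of sizes $j \cdot h_i$ and $(k_i - j) \cdot h_i$, so $\tau_i$-stability gives
\[
\tau_i\bigl(\min(j,\, k_i - j) \cdot h_i\bigr) \;\leq\; H \cdot s_{\max}.
\]
Taking $j = \lfloor k_i / 2 \rfloor$ and using that $\tau_i$ is nondecreasing (together with $h_i \geq 1$) yields $\tau_i(m) \leq H \cdot s_{\max}$ for every $m \leq \lfloor k_i / 2 \rfloor$.

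Finally, I would fix any $n \in \mathbb{N}$. Because the $k_i$ are pairwise distinct, only finitely many of them satisfy $\lfloor k_i / 2 \rfloor < n$, so there exists some $i$ with $\tau_i(n) \leq H \cdot s_{\max}$. Consequently
\[
f(n) \;=\; \min_{i \in \mathbb{N}} \tau_i(n) \;\leq\; H \cdot s_{\max},
\]
which is a constant independent of $n$, giving $f(n) = O(1)$.

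There is no substantive obstacle here; the argument is a pigeonhole-style bound. The one point that deserves care is invoking the fact that the terminal rectangle must itself be $\tau_i$-stable — this is specific to the size-dependent model, where a non-stable supertile can always break and hence cannot be terminal, and it is what pins the bound $H \cdot s_{\max}$ onto $\tau_i$ at every internal cut.
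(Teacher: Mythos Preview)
Your proof is correct and follows essentially the same argument as the paper: bound the strength of short cuts through the terminal rectangle by $(\text{height bound}) \cdot (\text{max glue strength})$, use stability to bound $\tau_i$ on an initial segment of $\mathbb{N}$, and then use the unboundedness of the $k_i$ to cover every $n$. The only cosmetic difference is that the paper constructs, for each $n \le k_i/2$, a cut with smaller side of size exactly $n$ using at most $c+1$ edges, whereas you use only straight vertical cuts and invoke monotonicity of $\tau_i$ to fill in the intermediate values.
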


\begin{proof}
Let $c \in \mathbb{N}$ be the maximum height of a rectangle assembled by a system $(T, \tau_i)$.
Let $g_{\rm max}$ be the maximum strength of a glue in $T$.
Let $\tg$ be a terminal assembly of $(T, \tau_i)$ and thus a rectangle with width $k_i$.
For any $n \leq k_i/2$, there exists a cut of $\tg$ into supertiles $\ta$, $\tb$ such that $n = |\ta| \leq |\tb|$ and the cut contains at most $c+1$ edges.
Then since $\tg$ is stable, $f(n) \leq \tau_i(n) \leq (c+1)g_{\rm max}$ for all $n \leq k_i/2$.
Because there exist infinitely many $k_i$, every $n$ has $n \leq k_i/2$ for large enough $k_i$ and we conclude that $f(n) \leq (c+1)g_{\rm max}$ for all $n \in \mathbb{N}$.
\end{proof}

\section{Squares}
\label{sec:squares}

Here we extend the constant-height rectangle construction in the last section to assemble squares.
The temperature function, tile types, and blocks from the constant-height rectangle construction are used to form the base of the square; additional tile types and blocks are used to ``fill in'' the remainder of the square once the base is complete.

\begin{theorem}
\label{thm:lazy-square}
There exists a tile set $T$ such that, for every $k \geq 7$, there exists a size-dependent system with tile set $T$ that self-assembles a $k \times k$ square.
\end{theorem}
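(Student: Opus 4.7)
The plan is to build on the $k \times 3$ rectangle construction from Theorem~\ref{thm:lazy-line}, using its output as the base of a $k \times k$ square and introducing new tile types that assemble the remaining $k \times (k-3)$ region row by row on top of the base. I would retain all tiles and blocks from the rectangle construction so that the bottom three rows of the square assemble exactly as before, changing only the north edges of the top base to expose new glues signaling ``base complete and ready to grow upward.''

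I would then introduce new tile types for the upper $k-3$ rows: an \emph{upper-row base} block that attaches at one end of the completed base (or of a completed upper row) and seeds horizontal growth of a new row, an \emph{upper filler} tile analogous to the bottom-filler tile, and a row-completion glue so that each completed row of width $k$ presents the same initiating north edges as the previous row. With these, each new row first grows horizontally as a height-1 supertile and then locks onto the growing square only when it has the correct width, after which the process repeats. To support this, I would extend the temperature function $\tau$ beyond the regime $n \leq 2k-2$ used in Theorem~\ref{thm:lazy-line} by specifying additional size intervals: small values so that a row-filler supertile can grow by absorbing single tiles, a medium value that only a width-$k$ row can overcome when attaching via a strength-2 locking glue plus the weak interface to the row beneath it, and a final value high enough that every cut of the completed $k \times k$ supertile has strength at least $\tau(\min(|\alpha|,|\beta|))$ for all partitions.

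The main obstacle will be making the upper-row growth phase cohabit with the base assembly: a single temperature function must orchestrate both, and an attempt to tune $\tau$ so that upper rows complete could, at some intermediate supertile size, accidentally break a half-assembled base or a partially grown row. I would address this by giving the upper-row tile types disjoint glue alphabets from the top-filler and bottom-filler tile types, except through the designated row-complete interface, so that mixed supertiles only arise after the base (or previous row) is fully assembled, and by partitioning the domain of $\tau$ into disjoint size windows aligned with each phase of assembly. To conclude, I would adapt the ``no waste'' argument from Theorem~\ref{thm:lazy-line} row by row, showing that every producible supertile that is not a substructure of the $k \times k$ square can break into substructures that are, and thereby conclude that the $k \times k$ square is the unique terminal supertile of the system.
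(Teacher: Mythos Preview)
Your high-level strategy---assemble the $k\times 3$ rectangle of Theorem~\ref{thm:lazy-line} as a scaffold and then fill upward---is exactly what the paper does. But your mechanism for the fill phase diverges from the paper in a way that creates the very obstacle you identify, and the paper's route sidesteps it entirely.

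The paper keeps the temperature function of Theorem~\ref{thm:lazy-line} \emph{unchanged}. No new size windows are introduced; $\tau$ still takes only the values $3,4,5,8$. The upper $k-3$ rows are not pre-assembled as width-$k$ rows and then docked; instead, a small \emph{bridge} block attaches above the top base, and single \emph{vertical filler} tiles (all glues strength~$2$) attach one at a time to the growing accumulator. Since $\tau(1)=3$ and each filler tile can bond on two sides with total strength~$4$, single-tile attachment is always permitted once a neighboring bridge or filler is in place; the scaffold's geometry, not a second width-measurement, bounds each row at width~$k$. The only thing left to prove is that every intermediate accumulator is $\tau$-stable, and this follows from a short case analysis: because every non-base glue has strength $\ge 2$ and $\tau(n)\le 8$, any break cut has at most three edges, and every $\le 3$-edge cut through the accumulator leaves a piece small enough that the cut strength meets $\tau$.

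Your proposal instead asks each upper row to self-assemble separately to width exactly~$k$ before attaching. That re-poses the width-measurement problem the base already solved, but now for a height-$1$ supertile with no analogue of the top/bottom interplay that made the base construction work. You then need new values of $\tau$ in the regime $n\ge 2k-1$ to gate row attachment, but those same values govern cuts of the partially filled accumulator (whose smaller side can have size anywhere in $[k, k^2/2]$), so the ``disjoint size windows aligned with each phase'' you propose cannot actually be disjoint. The paper avoids all of this by recognizing that once the scaffold exists, width need never be measured again: growth by single tiles against the scaffold is both geometrically constrained and, because $\tau(1)$ is small, always permitted.
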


\begin{proof}
The temperature function is identical to that of the construction in the proof of Theorem~\ref{thm:lazy-line}.
The function is:
\begin{displaymath}
\tau(n) = \left\{
\begin{array}{ll}
3 & : n \leq k-6\\
4 & : k-5 \leq n \leq k+3\\
5 & : k+4 \leq n \leq 2k-2\\
8 & : {\rm otherwise}
\end{array}
\right.
\end{displaymath}

The tile set consists of the six tile types and three blocks listed and named in Figure~\ref{fig:lazy-lt-square-blocks}.

\begin{figure}[ht]
\centering
\includegraphics[scale=1.0]{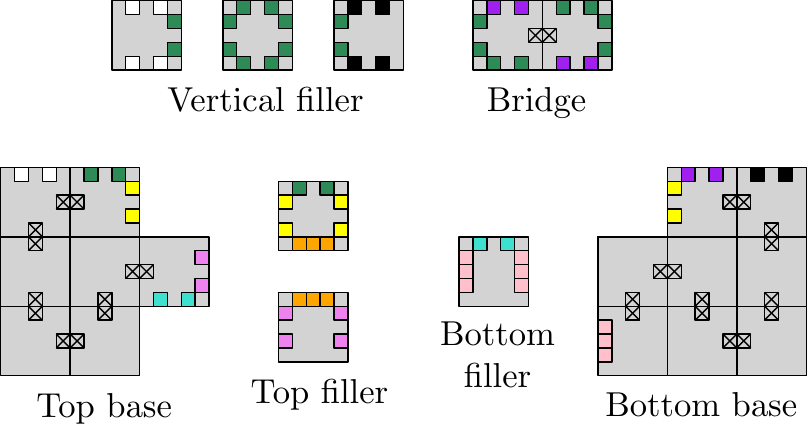}
\caption{The tile types and blocks for the square construction.
The gray glues are unique and have strength~8.}
\label{fig:lazy-lt-square-blocks}
\end{figure}

The system functions by first assembling the bottom three rows of the square, then filling in the remainder using the width of these rows as a scaffold.
The first three rows are assembled identically to the $k \times 3$ rectangle in the proof of Theorem~\ref{thm:lazy-line} -- refer to this proof for details.
Inspecting the glues of the tiles and blocks is sufficient to observe that vertical filler tiles can only combine with a supertile already containing a vertical filler tile or bridge, and a bridge can only combine with a supertile containing both top and bottom tiles.
As established in the proof of Theorem~\ref{thm:lazy-line}, the only such supertile is a unique $k \times 3$ rectangle supertile, called a \emph{scaffold} supertile.

\begin{figure}[ht]
\centering
\includegraphics[scale=1.0]{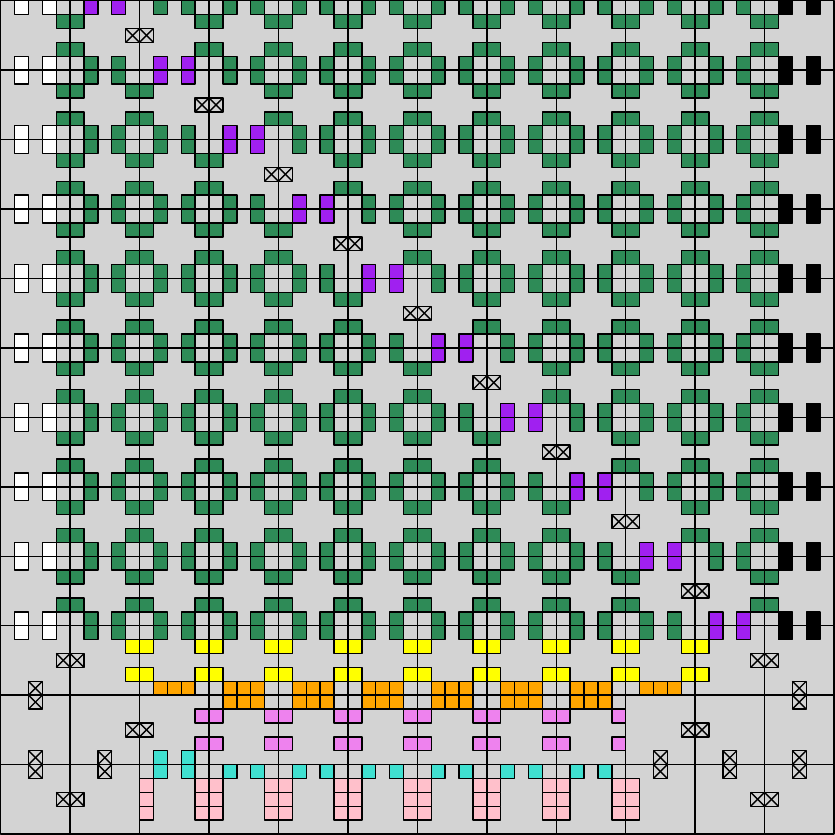}
\caption{The unique terminal supertile of the square construction.}
\label{fig:lazy-lt-square-final}
\end{figure} 

A bridge can combine with a scaffold to initiate a sequence of combinations between a growing supertile that ``fills in'' upwards, called an \emph{accumulator} supertile, and bridge and vertical filler tiles.
The result of this sequence is the terminal $k \times k$ supertile in Figure~\ref{fig:lazy-lt-square-final}.
There are many possible sequences of these combinations and thus many accumulators; we claim every accumulator is $\tau$-stable.
Provided this claim holds, the system has a unique terminal $k \times k$ supertile and thus self-assembles a $k \times k$ rectangle.
The remainder of the proof is dedicated to this claim.
 
Because the vertical filler tiles have exclusively strength-2 glues, any row of an accumulator that contains vertical filler tiles must contain a bridge.
Even more, the set of tiles in any row of the accumulator must be contiguous. 

Notice that $\tau(n) \leq 8$ for all $n$, and every glue has strength at least~2 and every pair of adjacent tiles, excluding those in bases, share a bond.
Consider the restrictions on break cuts of the accumulator.
Since break cuts cannot contain strength-8 bonds, no break cut enters a base.
Thus any break cut is equivalent to a path of at most~3 coincident edges shared by adjacent tiles that starts and ends on the boundary of the accumulator. 
The boundary of the accumulator has four ``sides'': the bottom, consisting of horizontal tile edges, the left and right, consisting of a vertical tile edges, and the top, consisting of both types of edges.
Consider the paths of length at most~3 between sides of the accumulator. 

A path that starts and ends on the same side of the accumulator has length~3 and the resulting partition has a single-tile part.
Then because $\tau(1) = 3 < 3 \cdot 2$, this cannot be a break cut.
A path that starts and ends on adjacent sides of the accumulator has length~2 or~3 and the resulting partition has a part of at most two tiles.
Then because $\tau(1) \leq \tau(2) \leq 4 \leq 2 \cdot 2$, this cannot be a break cut.
Finally, a path that starts and ends on opposite sides of the accumulator has length~3 and must be~3 vertical edges.
Because vertical filler tiles in a row are contiguous, one of the parts must contain no vertical filler tiles and thus is the smaller of the two parts.
This cut cannot be a break cut, as otherwise the $k \times 3$ scaffold supertile is $\tau$-unstable, contradicting a claim in the proof of Theorem~\ref{thm:lazy-line}.
Thus there are no break cuts and all accumulators are $\tau$-stable.
\end{proof}


%
The constant-height rectangle construction used as the basis for the construction of Theorem~\ref{thm:lazy-square} result in temperature functions that are bounded above by a constant.
We conjecture that there exists a square construction that uses temperature functions that all scale as $\Omega(\sqrt{n})$, and prove that no better lower bound is possible:

\begin{theorem}
\label{thm:square-lb}
Let $T$ be a tile set and $\tau_1, \tau_2, \dots$ be an infinite sequence of temperature functions such that the size-dependent system $(T, \tau_i)$ assembles a $k_i \times k_i$ square and $k_i$ are all distinct.
Let $f(n) = {\rm min}_{i \in \mathbb{N}}(\tau_i(n))$, the minimum of all temperature functions for size $n$.
Then $f(n)$ is not $\omega(\sqrt{n})$.
\end{theorem}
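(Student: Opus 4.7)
The plan is to mirror the proof of Theorem~\ref{thm:line-lb}, but to replace the thin ``strip cut'' (which sufficed in the constant-height case because only a constant number of edges could cross a strip boundary) with a two-dimensional corner cut of perimeter $O(\sqrt{n})$. Fix any $i$ and let $\tilde\gamma$ be a terminal supertile of $(T,\tau_i)$, which by hypothesis is a $k_i \times k_i$ square. For each $n \le k_i^2/2$ I will exhibit a partition of $\tilde\gamma$ whose smaller side has size exactly $n$ and whose total bond strength is $O(\sqrt{n})$. Stability of $\tilde\gamma$ then forces $\tau_i(n) = O(\sqrt{n})$, and since the $k_i$ are distinct (hence unbounded), every fixed $n$ eventually satisfies $n \le k_i^2/2$ for some $i$, so the bound descends to $f(n) = O(\sqrt{n})$, precluding any $\omega(\sqrt{n})$ behavior.

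For the cut I would take a staircase region $R$ flush to the bottom-left corner. Set $a = \lceil \sqrt{n}\, \rceil$, $q = \lfloor n/a \rfloor$, and $r = n - qa$, so $0 \le r < a$; let $R$ consist of a $q \times a$ rectangle in the corner together with the leftmost $r$ cells of the $(q{+}1)$-st row. Then $|R| = n$ exactly, and since $n \le k_i^2/2$ one has $a,\, q+1 \le k_i/\sqrt{2} + 1 \le k_i$ for all large enough $k_i$, so $R$ fits strictly inside the square. A direct count of the edges of the bond graph with exactly one endpoint in $R$ yields $a$ edges along the top of $R$, $q$ edges along the right of the rectangular portion, and (if $r > 0$) one further edge to the right of the partial row -- at most $a + q + 1 \le 2\sqrt{n} + 2$ cut edges in all. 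Writing $g_{\max}$ for the maximum glue strength in $T$ (a constant depending only on $T$), the strength of this cut is at most $(2\sqrt{n}+2)\,g_{\max}$.

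The complement of $R$ inside the square has size $k_i^2 - n \ge n$, so this is a partition into parts of sizes $n$ and $\ge n$, and $\tau_i$-stability of $\tilde\gamma$ gives $\tau_i(n) \le (2\sqrt{n}+2)\,g_{\max}$. For any $n$, choosing some $k_i \ge \sqrt{2n}$ and minimizing over $i$ yields $f(n) \le (2\sqrt{n}+2)\,g_{\max}$ for every $n$, so $f(n) = O(\sqrt{n})$ and in particular $f$ is not $\omega(\sqrt{n})$. The only point requiring care is the edge count for the ``notch'' where the partial row meets the full rectangle: one must verify that this zigzag contributes only a single extra boundary edge, so that the total stays at $a+q+1$ rather than something larger; beyond that subtlety, the argument is essentially the same perimeter-versus-area trade-off that drove the one-dimensional version.
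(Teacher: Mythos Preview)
Your argument is correct and uses the same isoperimetric idea as the paper: exhibit a cut of the terminal $k_i\times k_i$ square whose smaller side has size $n$ and whose boundary has $O(\sqrt{n})$ edges, then invoke stability. The paper does this only for the single value $n=\lfloor k_i^2/2\rfloor$ per system (using an L-shaped region with at most $k_i+1$ boundary edges), obtaining $f(n)\le 4g_{\max}\sqrt{n}$ at infinitely many $n$---already enough to rule out $\omega(\sqrt{n})$. You instead build the staircase region for \emph{every} $n\le k_i^2/2$, getting $f(n)\le (2\sqrt{n}+2)g_{\max}$ for all $n$, i.e.\ the strictly stronger conclusion $f(n)=O(\sqrt{n})$. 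Your edge count $a+q+1$ is right (the notch contributes only the single vertical edge to the right of the partial row; the horizontal edge above the notch is already among the $a-r$ top-of-rectangle edges), and choosing $k_i$ large for each fixed $n$ handles the fit and connectivity issues you flagged.
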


\begin{proof}
Let $c \in \mathbb{N}$ be the maximum height of a rectangle assembled by a system $(T, \tau_i)$.
Let $g_{\rm{max}}$ be the maximum strength of a glue in $T$.
We exhibit an infinite set of integers $n$ for which $f(n) \leq 4g_{\rm max}\sqrt{n}$, implying $f(n)$ cannot be $\omega(\sqrt{n})$.

Let $\tg$ be a terminal supertile of $(T, \tau_i)$, and thus a $k_i \times k_i$ square.
Let $n = \lfloor k_i^2/2 \rfloor$.
Then there exists a cut of $\tg$ into supertiles $\ta$, $\tb$, such that $n = |\ta| \leq |\tb|$ and the cut contains at most $k_i+1$ edges.
Then since $\tg$ is stable, $\tau_i(n) \leq (k_i+1)g_{\rm max} \leq (\sqrt{2n+2} + 1)g_{\rm max}$.
So for each $k_i$, there exists a unique $n$ such that $f(n) \leq \tau_i(n) \leq 4g_{\rm max}\sqrt{n}$.
\end{proof}

\section{$\tau$-stabilility is \coNP-complete}
\label{sec:tau-instability}

In two-handed tile assembly systems that are not size-dependent, determining whether a supertile is $\tau$-stable amounts to determining if there exists a cut of the bond graph of weight less than $\tau$, a problem decidable in polynomial time.
In contrast, we prove that the same problem is \coNP-complete for size-dependent systems, even when restricted to constant-time-computable temperature functions with just two distinct temperatures.


\begin{theorem}
Given a temperature function $\tau : \mathbb{N} \rightarrow \mathbb{N}$ and supertile, determining whether the supertile is $\tau$-stable is \coNP-complete. 
\end{theorem}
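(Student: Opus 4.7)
The upper bound is routine: containment in $\coNP$ follows by taking as the non-stability certificate a cut $(A,B)$ of the bond graph with $|A|\le|B|$ whose strength is less than $\tau(|A|)$. Such a witness has size linear in the number of tiles and is verifiable in polynomial time, provided $\tau$ is polynomial-time computable; this is trivially satisfied by the piecewise-constant, two-value $\tau$ that I will use for hardness.

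The substantive direction is $\coNP$-hardness, which I would prove by reducing \ThSAT{} to the complement problem of deciding that a supertile is \emph{not} $\tau$-stable. Given a 3SAT instance $\phi$ with $n$ variables and $m$ clauses, I aim to construct a supertile $S_\phi$ and a temperature function $\tau_\phi$ taking only two distinct values $\tau_{\text{low}} < \tau_{\text{high}}$ such that $S_\phi$ admits a break cut iff $\phi$ is satisfiable. The two-value profile $\tau_\phi(n)=\tau_{\text{low}}$ for $n\le n_0$ and $\tau_\phi(n)=\tau_{\text{high}}$ for $n>n_0$ acts as a size filter: only cuts with min-side size above $n_0$ can possibly require strength up to $\tau_{\text{high}}-1$, so making the gadgetry robust against small-piece peel-off cuts (every cut with min-side size $\le n_0$ has strength $\ge \tau_{\text{low}}$) forces every break cut to be approximately balanced. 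The reduction then turns on arranging for the assignment-induced balanced cuts to have strength below $\tau_{\text{high}}$ exactly when $\phi$ is satisfiable.

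To encode $\phi$ I would use a grid-embeddable planar gadget network: a \emph{variable gadget} for each $x_i$ that admits exactly two canonical balanced separating cuts through distinguished low-strength \emph{choice} bonds (one cut encoding $x_i=\text{true}$, the other $x_i=\text{false}$), together with a \emph{clause gadget} for each $C_j$ containing three \emph{literal} bonds that each cross the cut precisely when the associated literal is false in the chosen assignment. Every non-choice, non-literal bond in $S_\phi$ receives strength large enough that it cannot appear on any break cut. Routing the literal connections through disjoint rigid corridors of strong bonds (standard for planar-3SAT-style reductions) yields a supertile in which the strength of a balanced, assignment-induced cut equals a fixed constant plus the number of false literal occurrences; tuning $\tau_{\text{high}}$ just above the minimum such strength makes $S_\phi$ unstable exactly when that minimum is attainable, i.e.\ when $\phi$ is satisfiable.

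The main obstacle is \emph{cut hygiene}: establishing that the only cuts of $S_\phi$ whose strength falls below $\tau(|A|)$ are the canonical assignment-encoding cuts. Two families of rogue cuts must be ruled out: (i) small-side cuts that might otherwise be weak under $\tau_{\text{low}}$, which I would handle by padding each variable/clause gadget so that any cut with min-side $\le n_0$ must traverse only high-strength bonds, and (ii) balanced cuts that do not come from any assignment, which I would handle by choosing the non-choice bond strengths so that any deviation from a canonical cut adds more strength than it removes. Both demand careful per-gadget strength accounting, which I expect to be the bulk of the work; the complexity classification and correctness of the reduction then reduce to routine bookkeeping.
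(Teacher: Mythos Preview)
Your high-level scheme---$\coNP$ membership via a cut certificate, hardness via a planar gadget reduction with a two-value temperature function that forces near-balanced break cuts---matches the paper's. But the specific 3SAT encoding you sketch does not work. You state that the strength of an assignment-induced cut equals a fixed constant plus the number of false literal occurrences, and that tuning $\tau_{\text{high}}$ against this quantity detects satisfiability. It does not: minimizing the \emph{total} number of false literal occurrences over all assignments is not equivalent to satisfying every clause. An unsatisfiable instance can admit an assignment with very few false literal occurrences overall (over-satisfying most clauses while leaving one with all three literals false), and no single threshold on the total separates satisfiable from unsatisfiable instances. A correct clause gadget must impose a penalty precisely when all three of its literals are false---a disjunctive structure your plan does not describe---and getting such a gadget to interact cleanly with the balanced-cut constraint is where the real geometric work would lie.

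The paper sidesteps this by reducing from a problem whose threshold structure matches cut strength directly: maximum independent set in Hamiltonian cubic planar graphs. Vertex gadgets are laid out along the given Hamiltonian cycle on a horizontal line, and the temperature is set to $\tau(n)=1$ for $n<s/2$---so \emph{no} cut with unequal sides can be a break cut, making your small-side padding unnecessary---and to $\tau(n)=11|V|-k+1$ otherwise. Any break cut must then split the supertile into two exact halves and hence pass through every vertex gadget, choosing either an ``include'' or ``exclude'' segment; graph edges become wires that disconnect a half whenever two adjacent ``include'' segments are both chosen, and the cut strength is $11|V|$ minus the number of included vertices. A break cut therefore exists iff the graph has an independent set of size at least $k$. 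The Hamiltonian linear layout is what forces the cut through every gadget and makes balanced-cut hygiene essentially automatic; this structural idea is what your 3SAT plan lacks.
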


\begin{proof}
The problem is clearly in~\coNP, since verifying a break only requires comparing~$\tau$ of the sizes of the resulting supertiles to the strength of the bonds in the cut.
The \coNP-hardness of the problem is proved via a reduction from maximum independent set in Hamiltonian cubic (3-regular) planar graphs, proved \NP-hard in~\cite{Fleischner-2010}.\footnote{Instances of this problem include a Hamiltonian cycle of the graph.}

The reduction constructs a rectangular supertile with adjacent tiles sharing extremely strong bonds, except for a region of weakness running horizontally and partitioning the supertile into two equal-sized halves. 
This weakness runs through \emph{vertex gadgets} (see Figure~\ref{fig:instability-vertex-gadget}).

\begin{figure}[ht]
\centering
\includegraphics[scale=1.0]{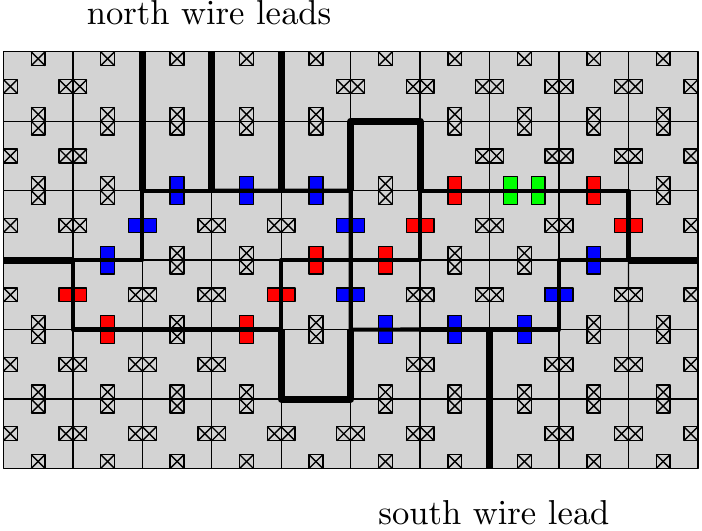}
\caption{The vertex gadget of the $\tau$-stability \coNP-hardness reduction. 
The gray glues have strength too large to be in any cut.}
\label{fig:instability-vertex-gadget}
\end{figure}

\textbf{Vertex gadgets.}
Vertex gadgets have two cut segments through them: \emph{include} and \emph{exclude} (blue and red, respectively, in Fig.~\ref{fig:instability-vertex-gadget}).
Include and exclude cut segments have the same number of bonds and place the same number of tiles on opposite sides of the cut segment.
All bonds on these segments are strength-1, except a special strength-2 bond on the exclude segment (green in Fig.~\ref{fig:instability-vertex-gadget}).
The bond structure of the vertex gadget ensures that any cut of the gadget disconnecting the upper and lower halves that keeps each half connected must be either the include or exclude segment.

Four short 0-strength cut segments called \emph{wire leads} extend vertically from the include segment.
Wire leads of distinct vertex gadgets are connected together by long 0-strength cut segments called \emph{wires}.
The wiring scheme is described next.

\textbf{Layout.}
Because the input graph is planar, Hamiltonian, and cubic, it can be drawn orthogonally in a $|V| \times |V|$ grid such that no two edges cross, all vertices lie on a horizontal line in the order they appear along a Hamiltonian cycle (see Figure~\ref{fig:cubic-drawing}), and all edges not in this cycle lying exclusively above or below the horizontal line containing the vertices.

\begin{figure}[ht]
\centering
\includegraphics[scale=1.0]{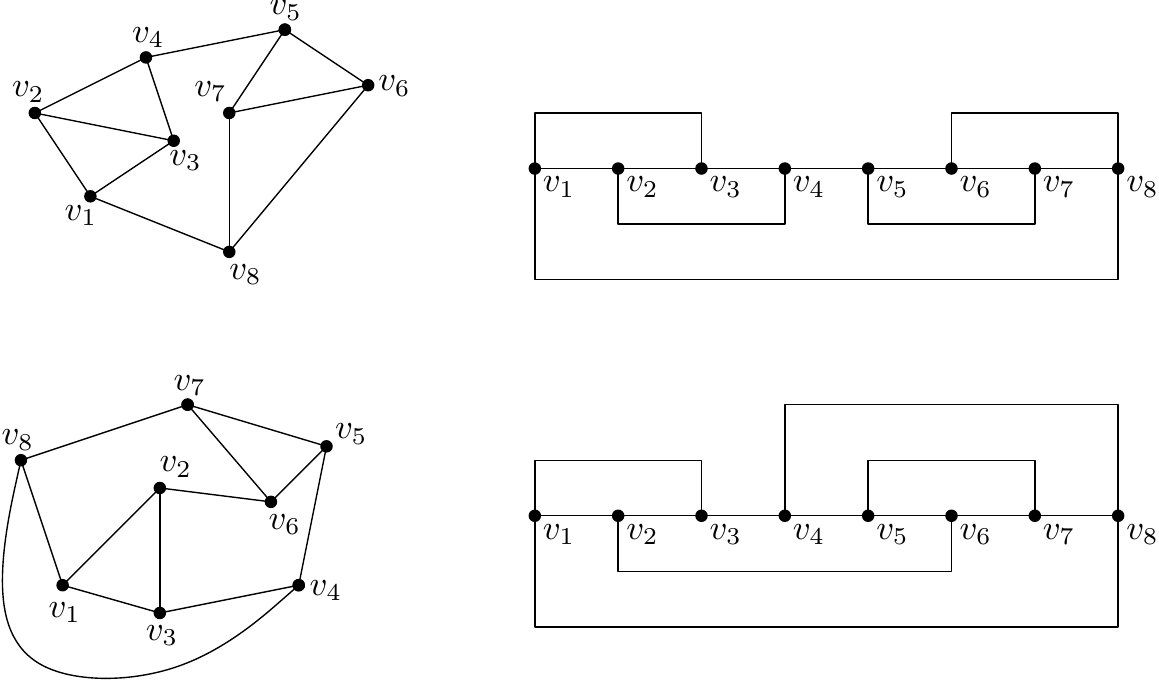}
\caption{A cubic planar Hamiltonian graph (left) and an orthogonal drawing of the graph in a $|V| \times |V|$ grid with collinear vertices and edges either above or below the line containing the vertices.}
\label{fig:cubic-drawing}
\end{figure}

This drawing of the input graph serves as a template for constructing the supertile (see Figure~\ref{fig:instability-schematic}).
The vertices are replaced by a row of vertex gadgets of width $10|V|$, the total width of the supertile, and are connected as in the drawing.
Adjacent gadgets on the Hamiltonian cycle are connected via the left and right north wire leads of the gadget, while non-adjacent gadgets are connected via the middle north or south wire leads.

\begin{figure}[htb]
\centering
\includegraphics[scale=1.0]{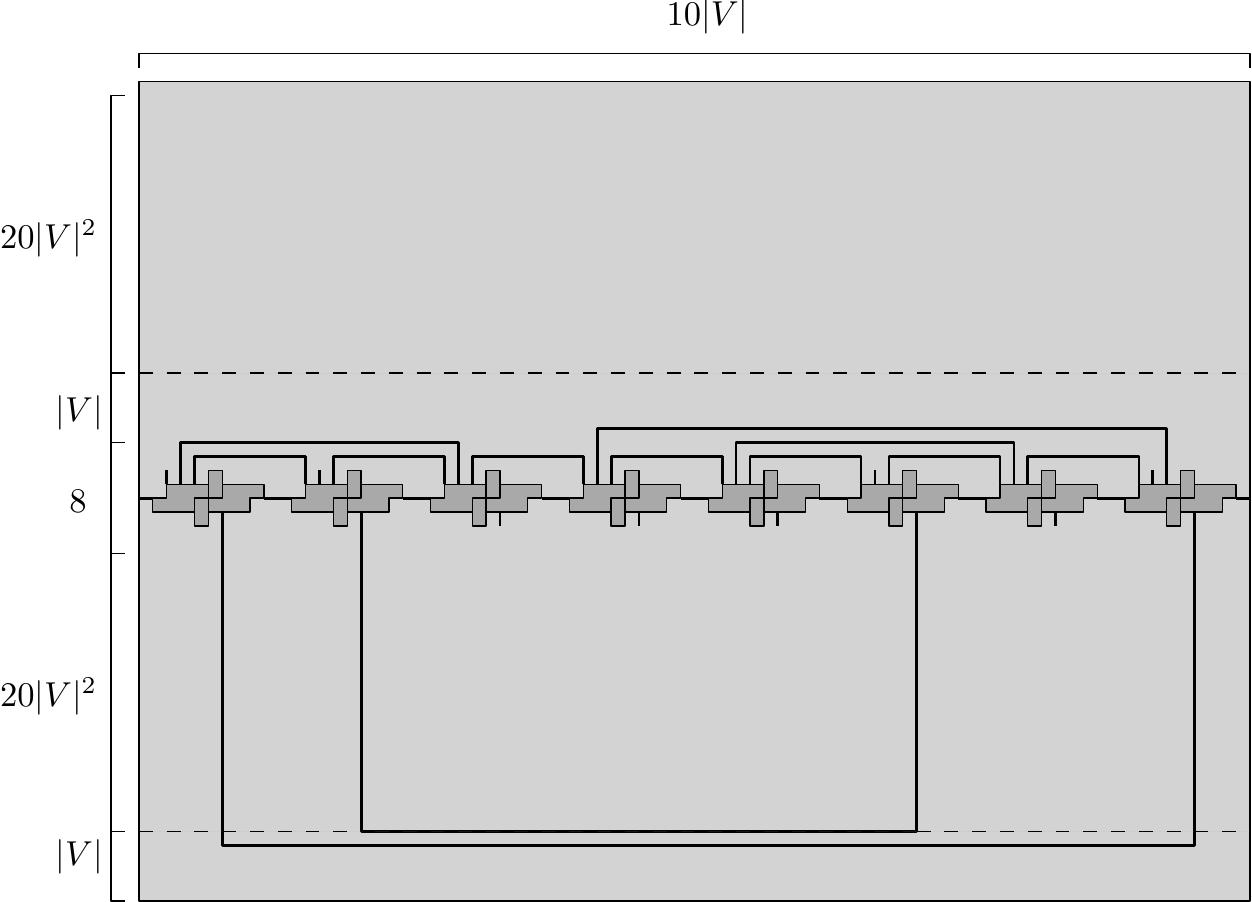}
\caption{A schematic of the supertile constructed by the reduction.
The input graph and drawing of this instance is that seen in the lower half of Figure~\ref{fig:cubic-drawing}.}
\label{fig:instability-schematic}
\end{figure}

Wires connecting north wire leads are contained in the $10|V| \times |V|$ region above the vertex gadgets corresponding to the upper half of the graph drawing.
Wires connecting south wire leads are contained in a $10|V| \times (20|V|^2 + |V|)$ region below the vertex gadgets.
The lowermost $|V|$ rows follow the drawing, while the upper $20|V|^2$ rows of this region only extend the south wire leads vertically.

\textbf{Temperature.}
In total, the supertile has size $s = 10|V| \cdot (40|V|^2 + 2|V| + 8)$.
Given an input instance with integer $k$ (``Does there not exist an independent set of size at least $k$?''), the temperature function used is:
\begin{displaymath}
\tau(n) = \left\{
\begin{array}{ll}
1 & : n < s/2\\
11|V|-k+1 & : {\rm otherwise} 
\end{array}
\right.
\end{displaymath}
Recall that a cut is a break cut if the cut has less than some strength specified by the temperature function.
Since $\tau(n) = 1$ for all $n < s/2$, the supertile may only break into supertiles of exactly equal size.

\textbf{Correctness.}
Consider restrictions on break cuts of the supertile.
The tiles in the northwest and southwest corners of the supertile cannot be in a common resulting supertile of the break, since such a half must have size at least $10|V| \cdot 20|V|^2 + 6 \cdot 20(|V|^2+|V|) > s/2$.
Thus any break cut must disconnect these two tiles.
Call the resulting supertiles of a break containing the tiles in the northwest and southwest corners of the supertile the \emph{upper half} and \emph{lower half}. 
Since each half must be connected, each contains the same number of tiles from each vertex gadget. 

Suppose the lower half contains tiles lying above the vertex gadgets.
Each tile is in a \emph{face} of the graph drawing formed by a maximal set of tiles connected by glues too strong to be in any cut.
Since no such face contains the northwest corner of the supertile, none of these faces are the outer face of the drawing.
So all such tiles lie in a region of size $10|V| \cdot (|V|+8) < 20|V|^2$.
Thus the lower half contains less than $20|V|^2$ tiles lying above the vertex gadgets.

On the other hand, every face below the vertex gadgets has size more than $20|V|^2$.
So the number of tiles contained in the upper half lying below the vertex gadgets is either more than $20|V|^2$ or~0.
Since the two halves have equal size and numbers of tiles in each vertex gadget, they must also have an equal number of tiles lying on the opposite side of the vertex gadgets, namely~0.
So the halves are separated by a path through the vertex gadgets traversing either the include or exclude cut segment in each gadget. 

No two include segments connected by a wire can both be included in a break cut, as the result is a disconnected upper (north wire) or lower (south wire) half.
Moreover, the strength of a cut using $i$ include segments is $10i + 11(|V|-i) = 11|V|-i$.
So there exists a cut of strength less than $11|V|-k+1$ between two connected supertiles both of size $s/2$ if and only if there exists a cut using $i \geq k$ include segments that avoids any two include segments connected by a wire.
Thus the supertile is $\tau$-stable if and only if there does not exist an independent set of size $k$ or larger in the input graph. 
\end{proof}

\section{Open Problems}
\label{sec:openProblems}

The rectangle and square constructions in this work use artificial temperature functions engineered in tandem with the tile sets.
A central open question is whether physically implementable families of temperature functions (e.g. $\tau(n) = cn^{\delta}$ for varying $c, \delta > 0$) are similarly capable of such control.
We conjecture that the design of such systems is possible but difficult; consider the lengthy analysis of the construction in Section~\ref{sec:line} with just~5 components.
Alternatively, temperature functions may be given as input along with shapes, with the goal of designing systems that assemble shapes \emph{despite} the temperature functions.

The difficulty of system design is supported by the \coNP-hardness of determining stability. 
Proving the \PSPACE-hardness of predicting a system outcomes, such as whether a unique terminal supertile exists, would give even further evidence of this difficulty.

As previously discussed, \emph{reversibility} is a key feature of size-dependent systems.
Reversibility has been more directly incorporated into algorithmic design in other tile assembly models, leading to functionality not found in irreversible models.
For instance, \emph{replication} of shapes and patterns~\cite{Abel-2010a,Keenan-2013}, \emph{fuel-efficient} systems~\cite{Padilla-2013a,Schweller-2013a}, and assembly of arbitrary shapes using a small, bounded scale factor~\cite{Demaine-2011b}.
Can any of these be achieved with size-dependent systems?

\section*{Acknowledgements}

This work began at the Bellairs Workshop on Self-Assembly and Computational Geometry, March 21-28th, 2014.
We thank the other participants for a productive and positive atmosphere,
in particular, Alexandra Keenan and the co-organizers, Erik Demaine and Godfried Toussaint.

\bibliographystyle{abbrv}
\bibliography{size-dep-arxiv}

\end{document}